\newcommand{\Fig}[1]{Fig.~\textup{\ref{#1}}}
\newtheorem{lemma}{Lemma}
\newtheorem{theorem}{Theorem}
\newtheorem{remark}{Remark}
\newtheorem{proposition}{Proposition}
\newcommand{\F}{\mathbb{F}}
\newcommand{\rp}{\rho}
\begin{document}

\title{An Upper Bound on the Minimum Distance of LDPC Codes over GF(q)}

\author{
  \IEEEauthorblockN{Alexey Frolov}
	
  \IEEEauthorblockA{\small Inst. for Information Transmission Problems\\
    Russian Academy of Sciences\\Moscow, Russia\\
    Email: alexey.frolov@iitp.ru
  }
}


%


\maketitle

\begin{abstract}
In \cite{BHL} a syndrome counting based upper bound on the minimum distance of regular binary LDPC codes is given. In this paper we extend the bound to the case of irregular and generalized LDPC codes over $GF(q)$. The comparison to the lower bound for LDPC codes over $GF(q)$ and to the upper bound for non-binary codes is done. The new bound is shown to lie under the Gilbert--Varshamov bound at high rates.
\end{abstract}

\section{Introduction}
In this paper we investigate the minimum code distance of LDPC codes \cite{G, T} over $\F_q$. Such codes have good error-correcting capabilities, efficient encoding and decoding algorithms. All of these makes the codes very popular in practical applications.

In \cite{BHL} a syndrome counting based upper bound on the minimum distance of regular binary LDPC codes is given. In this paper we extend the bound to the case of irregular and generalized LDPC codes over $\F_q$. 

Our contribution is as follows. First we derive the upper bound for generalized LDPC codes (we assume the Tanner graph \cite{T} to be a regular one) over $\F_q$. The bound depends on the weight\footnote{Here and in what follows by weight we mean the Hamming weight, i.e. a number of non-zero elements in a vector.} enumerator of the constituent code. Second we derive the upper bound for irregular LDPC codes (we assume the Tanner graph to be an irregular one) over $\F_q$. The constituent code in this case is a single parity-check (SPC) code over $\F_q$. We compare the new upper bound to the lower bound for LDPC codes over $\F_q$ \cite{FZ1} and to the upper bound for non-binary codes \cite{BHL1}. At last we show the derived bound to lie under the Gilbert--Varshamov bound at high rates.

\section{Generalized LDPC codes}

In this section we obtain the upper bound on the minimum distance of generalized LDPC codes. We use Elias--Bassalygo type arguments \cite{BE}.

Let us briefly consider the construction of generalized LDPC code $\mathcal{C}$ over $\F_q$. To construct such a code we use a bipartite graph, which is called the Tanner graph \cite{T} (see \Fig{tanner}). The graph consists of $N$ variable nodes $v_1, v_2, \ldots, v_N$ and $M$ check nodes $c_1, c_2, \ldots, c_M$. In this section we assume all the check nodes to have the same degree $n_0$ (such Tanner graphs are called right regular ones). We associate constituent codes to each of the check nodes. In this section all the constituent codes are the same (we denote the constituent code by $\mathcal{C}_0$). We assume $\mathcal{C}_0$ to be a linear $[n_0, R_0, d_0]$-code over $\F_q$. Let us denote the parity-check matrix of the constituent code by $\mathbf{H}_0$. The matrix has size $m_0 \times n_0$, where $m_0 = (1-R_0)n_0$.

\begin{figure}[t]
\centering
\includegraphics[width=0.4\textwidth]{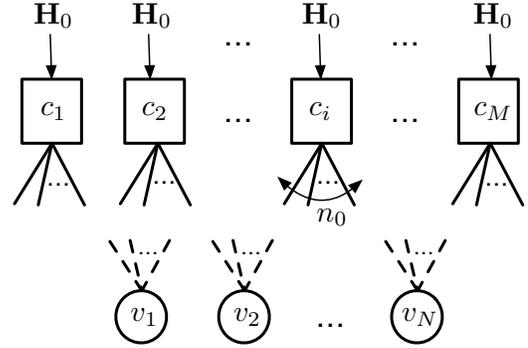}
\caption{Tanner graph}
\label{tanner}
\end{figure}

Let $G(s, n_0, d_0)$ be the weight enumerator of the code $\mathcal{C}_0$, i.e.
\[
G(s, n_0, d_0) = 1+\sum\limits_{i=d_0}^{n_0} A(i) s^i,
\]
where $A(i)$ is  the number of codewords of weight $i$ in a code $\mathcal{C}_0$.

To check if $\mathbf{r} = (r_1, r_2, \ldots, r_N) \in \F_q^N$ is a codeword of $\mathcal{C}$ we associate the symbols of $\mathbf{r}$ to the variable nodes ($v_i = r_i, i = 1,\ldots, N$). The word $\mathbf{r}$ is called a codeword of $\mathcal{C}$ if all the constituent codes are satisfied (the symbols which come to the codes via the edges of the Tanner graph form codewords of the constituent codes).

It is clear the resulting code $\mathcal{C}$ is linear, so it has a parity-check matrix associated to it. We denote the matrix by $\mathbf{H}$. The code is over $\F_q$ and has the length $N$.
 
By  $\mathbf{S}$ we denote the resulting syndrome of a generalized LDPC code, i.e, for a received sequence $\mathbf{r}$
\[
\mathbf{S} = \mathbf{H} \mathbf{r}^T.
\]
The syndrome consists of the constituent code syndromes and can be presented in such a way
\[
\mathbf{S} = (\mathbf{S}_1, \mathbf{S}_2, \ldots, \mathbf{S}_M),
\]
where $\mathbf{S}_i$, $i = 1, \ldots, M$, is a syndrome of the $i$-th constituent code. 

Let us introduce a notation. For a discrete random variable $X$, $H_Q(X)$ denotes the entropy of $X$, i.e.,
\[
H_Q(X) = -\sum\limits_{x} \Pr(X=x) \log_Q\Pr(X=x).
\]

In what follows we will need the fact formulated in the Lemma below
\begin{lemma}\label{lemma::entropy}
Let $X$ be the random variable taking $t$ values, let $p^* \geq 1/t$ and let 
\[
\Pr(X = x_i) = p_i \leq p^*, \:\: \forall i=1, \ldots, t,
\]
then
\[
H_Q(X) \geq -\log_Q(p^*).
\]
\end{lemma}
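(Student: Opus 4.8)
The plan is to prove the bound directly from monotonicity of the logarithm together with the normalization $\sum_{i=1}^t p_i = 1$; no convexity or Lagrange-multiplier machinery is needed. First I would note that, since $Q>1$, the map $u \mapsto -\log_Q u$ is strictly decreasing on $(0,1]$. Hence the hypothesis $p_i \le p^*$ yields, for every index $i$ with $p_i > 0$,
\[
-\log_Q p_i \ge -\log_Q p^*.
\]
Multiplying this inequality by the nonnegative number $p_i$ and summing over $i = 1, \ldots, t$ gives
\[
H_Q(X) = \sum_{i=1}^t \bigl(-p_i \log_Q p_i\bigr) \ge \bigl(-\log_Q p^*\bigr) \sum_{i=1}^t p_i = -\log_Q p^*,
\]
which is exactly the asserted inequality. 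Atoms with $p_i = 0$ contribute $0$ to the entropy under the usual convention $0\log_Q 0 = 0$ and can simply be discarded before applying the inequality above.

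The role of the side condition $p^* \ge 1/t$ is worth a remark, because it is not actually invoked in the chain of inequalities: it only guarantees that a probability distribution meeting all the stated constraints can exist (one needs $t p^* \ge 1$), and it ensures the bound is consistent with the trivial upper estimate $H_Q(X) \le \log_Q t$, since $p^* \ge 1/t$ forces $-\log_Q p^* \le \log_Q t$. I would state this explicitly so the reader is not left looking for where the assumption was spent.

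I do not expect any genuine obstacle. The only points needing a word of care are the handling of zero-probability outcomes (dispatched by the $0\log_Q 0 = 0$ convention) and the observation that $p^* \le 1$, so that $-\log_Q p^* \ge 0$ and the logarithm is evaluated on the decreasing branch; both are immediate. If one wants to be fully watertight even when $p^* > 1$, the inequality still holds trivially because $H_Q(X) \ge 0 > -\log_Q p^*$ in that degenerate case.
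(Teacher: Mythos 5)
Your proof is correct, and it is the standard one-line argument: from $p_i \le p^*$ one gets $-\log_Q p_i \ge -\log_Q p^*$, which after weighting by $p_i$ and summing gives $H_Q(X) \ge -\log_Q p^*$. The paper states this lemma without proof, so there is nothing to compare against; your handling of zero-probability atoms and your remark that the hypothesis $p^* \ge 1/t$ is only a feasibility condition (not actually used in the inequality) are both accurate.
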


Let us introduce some additional notation. For a real number $0 \leq x \leq 1$ let
\[
h_Q(x) = -x \log_Q x - (1-x) \log_Q (1-x) + x \log_Q(Q-1).
\]
be $Q$-ary entropy function.

We are ready to prove a theorem
\begin{theorem}
Let $\mathcal{C}$ be a generalized LDPC code of length $N$, rate $R$, minimum distance $\delta N$, with constituent $[n_0, R_0, d_0]$ code $\mathcal{C}_0$ over $\F_q$. Let $G(s, n_0, d_0)$ be the weight enumerator of $\mathcal{C}_0$. Then for sufficiently large $N$ the following inequality holds
\[
R(\mathcal{C}) \leq 1 - \max\limits_{\delta/2 \leq \omega \leq 1}\left[ \frac{h_q(\omega) - R_{CW}(q, \omega, \delta)}{h_{q^{m_0}}(1-p_0(\omega))} \right] + o(1),
\]
where
\[
p_0(\omega) = (1-\omega)^{n_0} G\left(\frac{\omega}{(1-\omega)(q-1)}, n_0, d_0\right).
\]
\end{theorem}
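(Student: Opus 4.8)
The plan is to estimate the entropy of the syndrome of a random low-weight word in two ways and squeeze the estimates together. Fix $\omega\in[\delta/2,1]$, let $\mathbf{z}$ be uniform over the words of $\F_q^N$ of Hamming weight $\lfloor\omega N\rfloor$, and set $\mathbf{S}=\mathbf{H}\mathbf{z}^T=(\mathbf{S}_1,\dots,\mathbf{S}_M)$; writing $N(i)$ for the $n_0$ variable nodes incident to check $c_i$, the $i$-th constituent syndrome $\mathbf{S}_i=\mathbf{H}_0(\mathbf{z}|_{N(i)})^T\in\F_q^{m_0}$ depends on $\mathbf{z}$ only through the restriction $\mathbf{z}|_{N(i)}$. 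Since the rate of the Tanner construction is $R(\mathcal{C})=1-Mm_0/N$, it suffices to prove $N\bigl(h_q(\omega)-R_{CW}(q,\omega,\delta)\bigr)\le m_0M\,h_{q^{m_0}}(1-p_0(\omega))+o(N)$ for every such $\omega$.

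For the lower bound, write $H_q(\mathbf{S})=H_q(\mathbf{z})-H_q(\mathbf{z}\mid\mathbf{S})$. Counting constant-weight words gives $H_q(\mathbf{z})=Nh_q(\omega)+o(N)$. Given $\mathbf{S}=\mathbf{s}$, the word $\mathbf{z}$ is uniform over the weight-$\omega N$ words in a single coset of $\mathcal{C}$; any two such words differ by a nonzero codeword of $\mathcal{C}$ and hence are at Hamming distance $\ge\delta N$, so they form a code of length $N$ and minimum distance $\delta N$ whose codewords all lie on the sphere of radius $\omega N$. By the constant-weight Elias--Bassalygo estimate such a code has at most $q^{NR_{CW}(q,\omega,\delta)+o(N)}$ codewords, with the $o(N)$ independent of $\mathbf{s}$, so $H_q(\mathbf{z}\mid\mathbf{S})\le NR_{CW}(q,\omega,\delta)+o(N)$ and therefore
\[
H_q(\mathbf{S})\ \ge\ N\bigl(h_q(\omega)-R_{CW}(q,\omega,\delta)\bigr)+o(N).
\]

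For the upper bound, subadditivity of entropy and $\mathbf{S}_i\in\F_q^{m_0}$ give $H_q(\mathbf{S})\le\sum_{i=1}^{M}H_q(\mathbf{S}_i)=m_0\sum_{i=1}^{M}H_{q^{m_0}}(\mathbf{S}_i)$. Here $\mathbf{S}_i=\mathbf{0}$ exactly when $\mathbf{z}|_{N(i)}\in\mathcal{C}_0$, and as $N\to\infty$ the restriction of a uniform weight-$\omega N$ word to $n_0$ fixed coordinates converges (with $o(1)$ error uniform in $i$, since $n_0$ is fixed) to the product measure on $\F_q^{n_0}$ in which each coordinate is $0$ with probability $1-\omega$ and uniform over $\F_q\setminus\{0\}$ otherwise. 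Under that measure a codeword of $\mathcal{C}_0$ of weight $j$ has probability $(1-\omega)^{n_0-j}\bigl(\omega/(q-1)\bigr)^j$, so summing over $\mathcal{C}_0$ against its weight enumerator yields $\Pr(\mathbf{S}_i=\mathbf{0})=p_0(\omega)+o(1)$, uniformly in $i$. Bounding the entropy of a $q^{m_0}$-valued variable by the value it attains when the non-zero mass is spread uniformly over the other $q^{m_0}-1$ symbols (the one-sided control of that mass being exactly Lemma~\ref{lemma::entropy} with $Q=q^{m_0}$), we get $H_{q^{m_0}}(\mathbf{S}_i)\le h_{q^{m_0}}\!\bigl(1-p_0(\omega)\bigr)+o(1)$, hence $H_q(\mathbf{S})\le m_0M\,h_{q^{m_0}}(1-p_0(\omega))+o(N)$.

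Combining the two estimates and dividing by $N\,h_{q^{m_0}}(1-p_0(\omega))>0$ gives, for every $\omega\in[\delta/2,1]$,
\[
\frac{h_q(\omega)-R_{CW}(q,\omega,\delta)}{h_{q^{m_0}}(1-p_0(\omega))}\ \le\ \frac{Mm_0}{N}+o(1)\ =\ 1-R(\mathcal{C})+o(1),
\]
and taking the maximum over $\omega$ yields the claimed bound. The two local estimates are where the real work lies: making the constant-weight Elias--Bassalygo bound $q^{NR_{CW}(q,\omega,\delta)+o(N)}$ genuinely uniform over the cosets of $\mathcal{C}$, and establishing that $\mathbf{z}|_{N(i)}$ is asymptotically the i.i.d.-$(\omega)$ product measure with an $o(1)$ error uniform over all $M$ checks so that the per-check entropy bounds can be summed; granting these, the rest is manipulation of the entropy identities above.
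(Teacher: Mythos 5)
Your proof is correct and follows essentially the same route as the paper: a uniform constant-weight source, subadditivity of the syndrome entropy, a constant-weight-code bound on the number of equal-syndrome words for the lower estimate, and the grouping (log-sum) bound evaluated at the limiting value of $\Pr(\mathbf{S}_i=\mathbf{0})=p_0(\omega)$ for the per-check upper estimate. The only cosmetic differences are that you phrase the lower bound via $H_q(\mathbf{S})=H_q(\mathbf{z})-H_q(\mathbf{z}\mid\mathbf{S})$ where the paper applies Lemma~\ref{lemma::entropy} to the maximal syndrome probability (the same counting argument underlies both), and your parenthetical appeal to Lemma~\ref{lemma::entropy} in the per-check step points the wrong way---that lemma lower-bounds entropy, whereas what you need (and in fact correctly state) is the log-sum/grouping upper bound.
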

\begin{proof}
Consider all possible vectors of length $N$, weight $W = \omega N$ over $\F_q$. We introduce an equiprobable distribution on such vectors.  Recall, that $\mathbf{S}$ denotes the syndrome and $\mathbf{S}_i$, $i=1,\ldots, M$, denotes the syndrome of the $i$-th constituent code. $\mathbf{S}$ and $\mathbf{S}_i$, $i=1,\ldots, M$, are random variables. 

Note, that
\begin{equation}\label{main_ineq}
H_q(\mathbf{S} = (\mathbf{S}_1, \mathbf{S}_2, \ldots, \mathbf{S}_M)) \leq \sum\limits_{i=1}^{M} H_q(\mathbf{S}_i). 
\end{equation}

Our aim now is to estimate left and right parts of the inequality (\ref{main_ineq}). 

Let us start from the \textit{\bf left part} of (\ref{main_ineq}). Let us consider the probability $\Pr(\mathbf{S} = \mathbf{s})$ for some fixed syndrome $\mathbf{s}$. It is clear that the number of vectors of weight $\omega N$ giving the syndrome $\mathbf{s}$ is upper bounded with the maximal cardinality $B(q, \omega N, \delta N)$ of a constant weight code with distance $\delta N$ over $\F_q$, in other words
\[
\Pr(\mathbf{S} = \mathbf{s}) \leq \frac{B(q, \omega N, \delta N)}{\binom{N}{\omega N} (q-1)^{\omega N}}.
\]

After applying Lemma~\ref{lemma::entropy} we have
\begin{eqnarray}\label{left_part}
H_q(\mathbf{S}) &\geq& -\log_q \left( \frac{B(q, \omega N, \delta N)}{\binom{N}{\omega N} (q-1)^{\omega N}} \right) \nonumber\\
                           &\geq& N ( h_q(\omega) - R_{CW}(q, \omega, \delta) + o(1) ),
\end{eqnarray}
where $R_{CW}(q, \omega, \delta)$ is an upper bound of the rate of constant weight code. 

Now we proceed with the \textit{\bf right part} of (\ref{main_ineq}). Let us consider the $i$-th constituent code, recall, that $\mathbf{S}_i$ is a random variable and it is easy to see that 
\begin{flalign*}
&p_0 = \Pr(\mathbf{S}_i = \mathbf{0}) \\
&= \frac{1}{\binom{N}{W} (q-1)^W} \left[\sum\limits_{i=0}^{n_0} \left\{ A(i) \binom{N-n_0}{W-i} (q-1)^{W-i}\right\} \right].
\end{flalign*}

We are interesting in asymptotic estimate when $N \to \infty$. In this case we have
\[
\frac{\binom{N-n_0}{W-i}}{\binom{N}{W}} \to \omega^i (1-\omega)^{n_0-i}
\]
and
\begin{eqnarray*}
p_0 &=& \left[\sum\limits_{i=0}^{n_0} \left\{ A(i) \omega^i (1-\omega)^{n_0-i} (q-1)^{-i}\right\} \right] + o(1) \\
       &=& (1-\omega)^{n_0} G\left(\frac{\omega}{(1-\omega)(q-1)}, n_0, d_0\right) + o(1). 
\end{eqnarray*}

After applying the log sum inequality for the entropy of the random variable $\mathbf{S}_i$ we have
\begin{flalign}\label{right_part}
&H_{q}(\mathbf{S}_i) = -\sum\limits_{j = 0}^{q^{m_0}-1} \Pr(\mathbf{S}_i = \mathbf{s}_j) \log_{q} \Pr(\mathbf{S}_i = \mathbf{s}_ j) \nonumber\\
&= -p_0 \log_{q} p_0 - \sum\limits_{j = 1}^{q^{m_0}-1} \Pr(\mathbf{S}_i = \mathbf{s}_j) \log_{q} \Pr(\mathbf{S}_i = \mathbf{s}_ j)  \nonumber\\     
&\leq -p_0 \log_{q} p_0 - (1-p_0) \log_{q} \frac{1-p_0}{q^{m_0}-1} \nonumber\\
& = m_0 h_{q^{m_0}}(1-p_0). 
\end{flalign}

Finally after substituting of (\ref{left_part}) and (\ref{right_part}) into (\ref{main_ineq}) we obtain
\begin{equation}\label{subst_omega}
R \leq 1 - \frac{h_q(\omega) - R_{CW}(q, \omega, \delta)}{h_{q^{m_0}}(1-p_0(\omega))} + o(1).
\end{equation}

Now the maximization domain is $0 < \omega \leq 1$, to finish the proof we need to reduce it to $ \delta/2 < \omega \leq 1$. We just need to note, that for $\omega \leq \delta/2$
\[
R_{CW}(q, \omega, \delta) = 0
\] 
and maximum (for this sub-interval) is achieved at $\omega = \delta/2$.
\end{proof}

\section{Irregular LDPC codes}
In this section we derive the upper bound for irregular LDPC codes over $\F_q$. We assume the Tanner graph to be irregular. The constituent code in this case is a single parity-check (SPC) code over $\F_q$.

First we note that an SPC code over $\F_q$ is an MDS code. For the MDS code the number of codewords of weight $W$ can be calculated as follows 
\begin{flalign*}
&A(W) = [s^W]G(s, d_0, n_0) \\
&= \binom{n_0}{W} (q-1) \sum\limits_{j=0}^{W-d_0} \left\{ (-1)^j \binom{W-1}{j} q^{W-d_0-j}\right\}.
\end{flalign*}

Thus the enumerator of an SPC code over $\F_q$ is as follows
\[
G(s, d_0=2, n_0) = \frac{1}{q} \left( 1 + (q-1)s\right)^{n_0} + \frac{q-1}{q} (1-s)^{n_0}.
\]

To formulate a theorem we need a notion of row degree polynomial
\[
\rp(x) = \sum\limits_{i=r_{\min}}^{r_{\max}} \rp_i x^i,
\]
where $\rp_i$ is a fraction of rows of the parity check matrix of weight $i$, $r_{\min}$ and $r_{\max}$ are the minimal and maximal row weights accordingly.

\begin{theorem}
Let $\mathcal{C}$ be an LDPC code of length $N$, rate $R$, minimum distance $\delta N$, with row degree polynomial $\rp(x)$. Then for sufficiently large $N$ the following inequality holds
\begin{flalign*}
&R(\mathcal{C}) \leq \overline{R}(q, \rp(x)) \\
&= 1 - \max\limits_{\delta/2 \leq \omega \leq 1}\frac{h_q(\omega) - R_{CW}(q, \omega, \delta) }{h_{q}\left[ \frac{q-1}{q} \left(1 - \rp \left( 1 - \frac{q}{q-1} \omega \right) \right)\right] } + o(1).
\end{flalign*}
\end{theorem}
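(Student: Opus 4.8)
The plan is to run the Elias--Bassalygo / syndrome-counting argument of Theorem~1 once more, adapting the right-hand side of inequality~(\ref{main_ineq}) to an \emph{irregular} right degree distribution and to constituent SPC codes, which contribute a single $\F_q$-symbol to the syndrome (so here $m_0 = 1$). Concretely, I would again put the uniform distribution on all weight-$W$ vectors of $\F_q^N$ with $W = \omega N$, write the syndrome as $\mathbf{S} = (\mathbf{S}_1,\dots,\mathbf{S}_M)$ with $\mathbf{S}_i \in \F_q$ the $i$-th check's syndrome, and start from $H_q(\mathbf{S}) \le \sum_{i=1}^{M} H_q(\mathbf{S}_i)$. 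The lower bound on the left-hand side is \emph{verbatim} (\ref{left_part}): two weight-$\omega N$ words with the same syndrome differ by a codeword of $\mathcal{C}$, so at most $B(q,\omega N,\delta N)$ of them share a syndrome, and Lemma~\ref{lemma::entropy} gives $H_q(\mathbf{S}) \ge N\bigl(h_q(\omega) - R_{CW}(q,\omega,\delta)\bigr) + o(N)$.

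For the right-hand side I would first isolate one check node of degree $r$. Running the asymptotic computation that produced $p_0$ in the proof of Theorem~1, but now with $\mathcal{C}_0$ the length-$r$ SPC code whose enumerator is $G(s,2,r) = \tfrac1q\bigl(1+(q-1)s\bigr)^r + \tfrac{q-1}{q}(1-s)^r$, the substitution $s = \omega/\bigl((1-\omega)(q-1)\bigr)$ followed by multiplication by $(1-\omega)^r$ makes the two terms collapse: the first becomes $\tfrac1q$ and the second $\tfrac{q-1}{q}\bigl(\tfrac{q-1-q\omega}{q-1}\bigr)^r$, so
\[
p_0^{(r)}(\omega) = \frac1q + \frac{q-1}{q}\Bigl(1 - \frac{q}{q-1}\,\omega\Bigr)^{\!r} + o(1),
\qquad
1 - p_0^{(r)}(\omega) = \frac{q-1}{q}\Bigl(1 - \bigl(1 - \tfrac{q}{q-1}\omega\bigr)^{\!r}\Bigr) + o(1).
\]
Since $m_0 = 1$, the log-sum step (\ref{right_part}) specializes to $H_q(\mathbf{S}_i) \le h_q\bigl(1 - p_0^{(r)}(\omega)\bigr) + o(1)$ for every degree-$r$ check.

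The new ingredient is the passage to the irregular graph. Grouping the $M$ checks by degree --- a fraction $\rp_r$ of them has degree $r$ --- gives $\sum_{i=1}^{M} H_q(\mathbf{S}_i) \le M\sum_{r=r_{\min}}^{r_{\max}} \rp_r\, h_q\bigl(1 - p_0^{(r)}(\omega)\bigr) + o(N)$, and now I would invoke the concavity of $h_q$ on $[0,1]$ (Jensen's inequality) to move the convex combination $\sum_r \rp_r(\cdot)$ inside $h_q$. Using $\sum_r \rp_r = \rp(1) = 1$ and $\sum_r \rp_r x^r = \rp(x)$, the averaged argument is exactly $\tfrac{q-1}{q}\bigl(1 - \rp(1 - \tfrac{q}{q-1}\omega)\bigr)$, so $\sum_{i=1}^{M} H_q(\mathbf{S}_i) \le M\,h_q\!\left[\tfrac{q-1}{q}\bigl(1 - \rp(1-\tfrac{q}{q-1}\omega)\bigr)\right] + o(N)$. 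Combining with the left-hand bound, dividing by $N\,h_q[\,\cdot\,]$, and using $M = (1-R)N$ (each SPC constituent imposes one parity check, the $m_0=1$ case of the rate count in Theorem~1) yields, for every $0 < \omega \le 1$, $R \le 1 - \dfrac{h_q(\omega) - R_{CW}(q,\omega,\delta)}{h_q\!\left[\frac{q-1}{q}\bigl(1-\rp(1-\frac{q}{q-1}\omega)\bigr)\right]} + o(1)$; maximizing over $\omega$ and dropping the sub-interval $\omega \le \delta/2$, on which $R_{CW}(q,\omega,\delta) = 0$ and the maximum is attained at $\omega = \delta/2$ (as in Theorem~1), gives the stated bound.

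I expect the two delicate points to be: (i) the closed-form simplification of $p_0^{(r)}$ --- one must verify the telescoping algebra and check that $1 - p_0^{(r)}(\omega) = \tfrac{q-1}{q}\bigl(1-(1-\tfrac{q}{q-1}\omega)^r\bigr)$ indeed stays in $[0,1]$ over the whole range $\delta/2 \le \omega \le 1$, including $\omega > (q-1)/q$ where the base $1 - \tfrac{q}{q-1}\omega$ is negative, so that $h_q$ and its concavity are legitimately applied; and (ii) the Jensen step, which is precisely what converts the per-degree entropy bounds into a single functional of the row-degree polynomial $\rp$ and is the genuine novelty over the right-regular Theorem~1. The accumulated $o(1)$ corrections are harmless because a sparse ensemble has boundedly many distinct check degrees, so only finitely many of them are summed.
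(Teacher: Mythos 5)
Your proposal is correct and follows essentially the same route as the paper: the left-hand side of (\ref{main_ineq}) is reused verbatim, the SPC weight enumerator collapses under the substitution $s=\omega/((1-\omega)(q-1))$ to give $1-p_0^{(r)}=\tfrac{q-1}{q}\bigl(1-(1-\tfrac{q}{q-1}\omega)^r\bigr)$, and the concavity of $h_q$ (Jensen over the degree fractions $\rho_r$) moves the average inside to produce $\rho(1-\tfrac{q}{q-1}\omega)$. Your write-up is in fact more explicit than the paper's, which states the degree-grouped identity and the Jensen step without spelling out the telescoping algebra or the $M=(1-R)N$ accounting.
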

\begin{proof}
Consider the right part of (\ref{main_ineq}), we have
\begin{flalign*}
&\frac{1}{N}\sum\limits_{i=1}^{M} H_q(\mathbf{S}_i) \\
&= (1-R) \sum\limits_{i=r_{\min}}^{r_{\max}} \rp_i h_{q}\left[ 1-(1-\omega)^{n_0} G\left(\frac{\omega}{(1-\omega)(q-1)}\right)\right] \\
&= (1-R) \sum\limits_{i=r_{\min}}^{r_{\max}} \rp_i h_{q}\left[ \frac{q-1}{q} - \frac{q-1}{q} \left( 1 - \frac{q}{q-1} \omega \right)^{i}\right] \\
&\leq (1-R) h_{q}\left[ \frac{q-1}{q} - \frac{q-1}{q} \rp \left( 1 - \frac{q}{q-1} \omega \right) \right]. 
\end{flalign*}
These completes the proof.
\end{proof}

\begin{remark}
We note that the bound improves the result from \textup{\cite{BHL}} for the binary case. Recall that in \textup{\cite{BHL}} in case of irregular LDPC code it is suggested to just substitute $r_{\max}$ to the bound for regular code.
\end{remark}

At last we prove that the upper bound is better for regular codes (with the same average row degree as irregular codes).

\begin{proposition}\label{prop}
Let $b > 0$ be an integer, let $\rp(x)$ be the row degree distribution of irregular code, such that $\sum\nolimits_{i=r_{\min}}^{r_{\max}} i \rp_i = b$
and let $\rp_{\text{reg}} = x^{b}$, then
\[
\overline{R}(q, \rp(x)) \leq \overline{R}(q, \rp_{\text{reg}}(x)).
\]
\end{proposition}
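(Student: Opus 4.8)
The plan is to exploit that, in the formula for $\overline{R}(q,\rp(x))$, only the denominator depends on the row degree polynomial: the numerator $h_q(\omega)-R_{CW}(q,\omega,\delta)$ and the maximization interval $[\delta/2,1]$ are identical for $\rp(x)$ and for $\rp_{\text{reg}}(x)=x^{b}$. Since that numerator is non-negative on $[\delta/2,1]$, enlarging the (positive) denominator can only decrease each ratio, hence decrease the maximum, hence increase $1-\max(\cdot)$. So the proposition follows once we establish the pointwise inequality
\[
h_q\!\left[\frac{q-1}{q}\bigl(1-\rp(y)\bigr)\right]\;\le\;h_q\!\left[\frac{q-1}{q}\bigl(1-y^{\,b}\bigr)\right],\qquad y:=1-\frac{q}{q-1}\,\omega ,
\]
for every relevant $\omega$, where $y$ runs over $[-\tfrac{1}{q-1},1]$ as $\omega$ runs over $[\delta/2,1]$.

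First I would settle the case $\omega\le\frac{q-1}{q}$, i.e.\ $y\in[0,1]$. Here $t\mapsto y^{t}$ is convex on $[0,\infty)$, so Jensen's inequality applied to the probability vector $(\rp_i)$, together with $\sum_i i\,\rp_i=b$, gives
\[
\rp(y)=\sum_i \rp_i\, y^{\,i}\;\ge\; y^{\sum_i i\rp_i}=y^{\,b}\;\ge\;0 .
\]
Hence $\tfrac{q-1}{q}(1-\rp(y))\le\tfrac{q-1}{q}(1-y^{\,b})$, and both arguments lie in $[0,\tfrac{q-1}{q}]$, the interval on which $h_q$ is increasing, so applying $h_q$ preserves the inequality. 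Written out through the substitution $y=1-\tfrac{q}{q-1}\omega$, this is precisely the chain of inequalities already displayed in the preceding proof, now with $\rp(\cdot)$ specialized to $x^{b}$.

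The step I expect to be the real obstacle is the complementary range $\omega>\frac{q-1}{q}$, i.e.\ $y\in[-\tfrac{1}{q-1},0)$, where $t\mapsto y^{t}$ is not even real-valued off the integers, so the Jensen estimate must be replaced. Two features should rescue the argument: in this range $|y|\le\tfrac{1}{q-1}$ is small, so both $\rp(y)$ and $y^{\,b}$ are small in magnitude, and $h_q$ is flat at its maximum $\tfrac{q-1}{q}$ (its derivative vanishes there), so $z\mapsto h_q[\tfrac{q-1}{q}(1-z)]$ behaves like $1-cz^{2}$ near $z=0$; one then compares the two arguments to second order, using that an irregular $\rp(x)$ necessarily has $r_{\min}<b$ (equality forces $\rp(x)=x^{b}$), whence $|\rp(y)|\sim \rp_{r_{\min}}|y|^{r_{\min}}$ dominates $|y|^{b}$ for small $|y|$ and the desired inequality persists. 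It is worth noting that the bare pointwise denominator inequality can actually reverse for $\omega$ very close to $1$, but the numerator is then correspondingly small (it vanishes at $\omega=1$ when $q=2$), so this does not affect the maximum. Making this $\omega>\frac{q-1}{q}$ regime rigorous — or, alternatively, showing outright that for the regular code the maximum over $\omega$ is attained at some $\omega^{\star}\le\frac{q-1}{q}$, where $h_q$ already attains its ceiling $1$ at $y=0$, so that only the clean case is ever needed — is the part that requires the most care.
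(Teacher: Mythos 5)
Your main line of attack is exactly the paper's: the published proof of Proposition~\ref{prop} consists, in its entirety, of the observation that for $\alpha>0$ one has $\rp(\alpha)\geq\alpha^{\sum_i i\rp_i}=\rp_{\text{reg}}(\alpha)$ by Jensen's inequality applied to the convex function $t\mapsto\alpha^t$ (the paper says ``concavity'' where it means convexity), after which the monotonicity of $h_q$ on $[0,\tfrac{q-1}{q}]$ and the fact that only the denominator of $\overline{R}$ depends on $\rp$ are left implicit. Your treatment of the regime $\omega\leq\tfrac{q-1}{q}$ therefore coincides with the paper's argument and is, if anything, spelled out more completely.

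The regime $\omega>\tfrac{q-1}{q}$, which you correctly single out as the genuine obstacle, is simply not addressed in the paper: its proof begins ``Let $\alpha>0$'' and never returns to the case $\alpha=1-\tfrac{q}{q-1}\omega\leq 0$, where $\alpha^t$ is not even defined for non-integer $t$ and the Jensen step collapses. So you have not missed an idea that the paper supplies; you have located a gap in the published proof. Your proposed repairs are sensible --- in particular, your observation that the pointwise denominator inequality can reverse near $\omega=1$ while the numerator $h_q(\omega)-R_{CW}(q,\omega,\delta)$ simultaneously degenerates shows that any fix must couple numerator and denominator, or else one must show outright that the maximizing $\omega^\star$ for the regular bound satisfies $\omega^\star\leq\tfrac{q-1}{q}$ --- but neither route is carried out in the paper, so a fully rigorous proof of the proposition requires you to supply that part yourself.
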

\begin{proof}
Let $\alpha > 0$. By the concavity of the function $\alpha^x$ we have
\[
\rp(\alpha) \geq \alpha^{\sum\nolimits_{i=r_{\min}}^{r_{\max}} i \rp_i } = \rp_{\text{reg}}(\alpha).
\] 
These completes the proof.
\end{proof}

\section{Numerical results}

In this section we present the numerical results. We use an upper bound derived in \cite{BHL1} as a function $R_{CW}(q, \omega, \delta)$. To the best knowledge of the author the bound is currently the best upper bound on the rate of non-binary constant weight codes. The results are shown in Tables \ref{tab::irreg}, \ref{t_8} and \ref{t_64}. We use the following notation:

\begin{itemize}
\item $\delta_{GV}$ -- the Gilbert--Varshamov bound;
\item $\delta^{(U)}_{LDPC}$ -- the new bound for LDPC codes derived in the paper;
\item $\delta^{(L)}_{LDPC}$ -- the lower bound for LDPC codes from \cite{FZ1};
\item $\delta_{BHL}$ -- the upper bound on the minimum distance of non-binary codes \cite{BHL1}, which is an improvement of the Aaltonen bound \cite{A}. 
\end{itemize}

We first compare the values of the new estimate $\delta^{(U)}_{LDPC}$ for regular and irregular codes. In Proposition~\ref{prop} we proved that the bound is better for regular codes. Here we present some values calculated for 
$q = 8$ and $R = 0.9$. We fix the degree of the variable node $\ell = 3$. The results are shown in Table~\ref{tab::irreg}. We note, that for this case $\delta_{BHL} = 0.0638$ and $\delta_{GV} = 0.0328$.

\begin{table}
\caption{Comparison of regular and irregular LDPC codes for $q = 8$, $R = 0.9$}
\label{tab::irreg}
\centering
\begin{tabular}{|c|c|c|c|c|c|c|}
\hline
${\rp}_{15}$ & 0 & 0.25  & 0.125     & 0     \\
${\rp}_{20}$ & 0 & 0       & 0.125   & 0     \\
${\rp}_{25}$ & 0 & 0       & 0       & 0.5  \\
${\rp}_{30}$ & 1 & 0.5    & 0.5      & 0      \\
${\rp}_{35}$ & 0 & 0       & 0       & 0.5 \\
${\rp}_{40}$ & 0 & 0       & 0.125   & 0     \\
${\rp}_{45}$ & 0 & 0.25  & 0.125     & 0     \\
\hline
$\delta^{(U)}_{LDPC}$ & 0.0512 &  0.0493 & 0.0500 & 0.0512 \\
\hline
\end{tabular}
\end{table}

For now let us compare $\delta_{GV}$, $\delta_{up}$ and $\delta_{BHL}$ for the case of high-rate LDPC codes over $\F_8$. In Table~\ref{t_8} the results are shown. We choose regular $(\ell = 3, n_0)$ LDPC codes. We see that the new bound improves the best upper bound for non-binary codes ($\delta_{BHL}$). We also see that at very high rates ($R> 0.994$) the new bound lies below the Gilbert--Varshamov bound. We note that the interval of rates in which we observe this behavior is decreasing when $q$ grows. For $q = 2$ the interval is $R> 0.985$, for $q = 16$ the interval is $R > 0.997$.

\begin{table}
\caption{Results for high-rate codes, $q=8$}
\label{t_8}
\centering
\begin{tabular}{|c||c|c|c|c|c|c|}
\hline
$(\ell, n_0); R$ & $\delta_{GV}$ & $\delta^{(U)}_{LDPC}$ & $\delta_{BHL}$ \\
\hline
\hline
(3,10); 0.7       & 0.1260        &  0.2102 & 0.2239 \\
\hline
(3,50); 0.94     & 0.0179        &  0.0263 & 0.0355 \\
\hline
(3,100); 0.97   & 0.0080        &  0.0106 & 0.0106 \\
\hline
(3,200); 0.985 & 0.0036        &  0.0043 & 0.0073 \\
\hline
(3,500); 0.994 & {\bf 0.0013} & {\bf 0.0013} & 0.0026 \\
\hline
(3,600); 0.995 & {\bf 0.0011} & {\bf 0.0010} & 0.0021\\
\hline
\end{tabular}
\end{table}

At last we compare the new upper bound to the lower bound on the minimum distance of LDPC codes over $\F_q$. We use the lower bound from \cite{FZ1}. The results for $q = 64$ are shown in Table~\ref{t_64}.

\begin{table}[!t]
\caption{Comparison to the lower bound, $q=64$}
\label{t_64}
\centering
\begin{tabular}{|c||c|c|c|}
\hline
$(\ell, n_0); R$  &  $\delta_{GV}$ & $\delta^{(L)}_{LDPC}$ & $\delta^{(U)}_{LDPC}$\\
\hline
\hline
(14, 16); 0.125 & 0.7400 & 0.7355 & 0.8539  \\
\hline
(9, 12); 0.25 & 0.5894 & 0.5860 & 0.7319 \\
\hline
(15, 24); 0.375 & 0.4608 & 0.4585 & 0.6101 \\
\hline
(14, 28); 0.5 & 0.3462 & 0.3445 & 0.4881\\
\hline
(15, 40); 0.625 & 0.2427 & 0.2415 & 0.3661\\
\hline
(13, 52); 0.75 & 0.1492 & 0.1480 & 0.2441\\
\hline
(8, 64); 0.875 & 0.0665 & 0.0575 & 0.1221\\
\hline
\end{tabular}
\end{table}

\section{Conclusion}
The new upper bound on the minimum distance of generalized and irregular LDPC codes over $\F_q$ is derived. For the derivation of the bound we used Bassalygo--Elias type arguments. The bound is proved to be better for regular LDPC codes over $\F_q$. We compared the new upper bound to the lower bound for LDPC codes over $\F_q$ and to the upper bound for non-binary codes. We showed, that at very high rates ($R> 0.994$ for $q=8$) the new bound lies below the Gilbert--Varshamov bound. We note that the interval of rates in which we observe this behavior is decreasing when $q$ grows. For $q = 2$ the interval is $R> 0.985$, for $q = 16$ the interval is $R > 0.997$.

\section*{Acknowledgment}
The author thanks V.V. Zyablov for the numerous advice and recommendations. This work was partially supported by Russian Science Foundation grant 14-50-00150.

\newpage

\end{document}